\def\<{\langle}
\def\>{\rangle}
\newcommand{\Tr}{\mathrm{Tr}}
\def\oper{{\mathchoice{\rm 1\mskip-4mu l}{\rm 1\mskip-4mu l}
{\rm 1\mskip-4.5mu l}{\rm 1\mskip-5mu l}}}
\DeclareMathAlphabet\mathbfcal{OMS}{cmsy}{b}{n}
\mathchardef\mhyphen="2D 
\newtheorem{Definition}{Definition}
\newtheorem{Proposition}{Proposition}
\newtheorem{Example}{Example}
\begin{document}

\title{Entanglement witnesses and separability criteria based on generalized equiangular tight frames}

\author{Katarzyna Siudzi\'{n}ska$^\ast$}
\affiliation{Institute of Physics, Faculty of Physics, Astronomy and Informatics \\  Nicolaus Copernicus University in Toru\'{n}, ul. Grudzi\k{a}dzka 5, 87--100 Toru\'{n}, Poland \\ e-mail:kasias@umk.pl}

\begin{abstract}
In quantum information theory, measurements are represented by positive, operator-valued measures. In this paper, we use the operators corresponding to generalized equiangular measurements to construct positive maps. Their positivity follows from the properties of index of coincidence for few equiangular tight frames. These maps give rise to entanglement witnesses, which include as special cases many important classes considered in the literature. Additionally, we introduce separability criteria based on the correlation matrix and analyze them for various types of measurements.
\end{abstract}

\flushbottom

\maketitle

\thispagestyle{empty}

\section{Introduction}

Entanglement is one of the most intriguing and fundamental features of quantum mechanics, playing a central role in quantum information science, including quantum computation, quantum cryptography, and teleportation protocols \cite{HHHH,Nielsen}. However, detecting whether a given quantum state is entangled or separable can be a highly nontrivial task, especially in higher-dimensional or multipartite systems \cite{Gurvits}.

The Peres-Horodecki criterion, also known as the PPT (positive partial transpose) criterion, is a widely used method for detecting quantum entanglement in bipartite systems. It was introduced by Peres \cite{PhysRevLett.77.1413} and later fully characterized by the Horodecki family \cite{PHHH,HORODECKI1997333}. This criterion provides a necessary and sufficient separability condition for qubit-qubit and qubit-qutrit systems. In other instances, only some PPT states are separable. The rest are PPT entangled states, describing bound entanglement \cite{HHH_bound}. Therefore, more refined detection methods are needed. The CCNR (computable cross-norm or realignment) criterion \cite{Rudolph,HHH_OSID,ChenWu,ChenWu2} rearranges the density matrix via realignment and computes its trace norm. The nonlinear entanglement criteria use nonlinear functionals of the density matrix, like the purity or Tsallis entropy \cite{GuhneLev}. The correlation matrix criterion \cite{deVicente,deVicente2} and the covariance matrix criterion \cite{CMM1,Gittsovich} rely on calculations upon matrices constructed for given states.

Entanglement witnesses provide a powerful and practical tool for identifying entangled states. A Hermitian operator is called an entanglement witness if it is positive on all separable states and negative on at least one entangled state. While a single entanglement witness cannot detect all entangled states, every entangled state can, in principle, be detected by some witness \cite{Terhal1,Terhal2}. Importantly, entanglement detection can be carried out without requiring full knowledge of the state. Thus, entanglement witnesses form a central component of experimental and theoretical strategies for detecting quantum entanglement \cite{Barbieri,Lewenstein}.

Of particular interest are indecomposable entanglement witnesses, which allow one to detect PPT entangled states. Such states are also called bound entangled \cite{HHH_bound,HORODECKI1997333,bound_review}. They are non-distillable, which means that no protocols exist that transform them into the states that are maximally entangled \cite{Clarisse,Horodeccy2}. Bound entangled states find important applications e.g. in quantum cryptography \cite{Ozols,Augusiak} and quantum metrology \cite{Toth,Vertesi}. Moreover, bound entanglement can be \emph{activated} to provide resources important for quantum information and communication technologies \cite{Smolin,Kaneda,Ozaydin}.

In the construction of entanglement witnesses, one often uses positive but not completely positive maps defined via measurement operators. A wide class of witnesses from mutually unbiased bases (MUBs) \cite{Durt} was derived in ref. \cite{MUBs} and analyzed in more details for qutrit-qutrit systems. Further generalizations of this formalism include replacing the mutually unbiased bases with mutually unbiased measurements (MUMs) \cite{Li,P_maps}, symmetric informationally complete measurements (SIC POVMs) \cite{EW-SIC}, symmetric measurements \cite{SIC-MUB}, and equiangular tight frames \cite{W_ETS}. Witnesses for composite systems of unequal dimensions were also considered \cite{EW-2MUB}. The number of measurement operators required to detect bound entanglement was analyzed in refs. \cite{bound_ent,MUM_purity}.

In this paper, we provide a construction method for positive (but not trace preserving) maps and the associated entanglement witnesses from informationally overcomplete measurements. For this purpose, we apply the generalized equiangular measurements (GEAMs) \cite{GEAM}, which are sets of mutually unbiased non-projective equiangular tight frames. Importantly, we focus on those measurements that form conical 2-designs, because only then a partial index of coincidence can be derived.
Interestingly, the entanglement witnesses from GEAMs are parametrized only by two partial indices of coincidence.
Examples of indecomposable witnesses (detecting bound entanglement) are presented for the qutrit-qutrit scenario.
Finally, as an alternative way to detect entanglement, we provide enhanced separability criteria via the correlation matrix.

\section{Measurements from generalized equiangular tight frames}

A set $\{P_k:\,k=1,\ldots,M\}$ of rank-1 projectors $P_k$ acting on the Hilbert space $\mathcal{H}\simeq\mathbb{C}^d$ is an equiangular tight frame if $\Tr(P_kP_\ell)=c\Tr(P_k)\Tr(P_\ell)$ for all $k\neq\ell$ and $\sum_{k=1}^MP_k=\gamma\mathbb{I}_d$ with $\gamma>0$ \cite{Strohmer}. If $P_k$ are no longer rank-1, then they form a generalized equiangular tight frame \cite{GEAM}. Now, take collections of $N$ generalized equiangular tight frames $\mathcal{P}_\alpha=\{P_{\alpha,k}:\,k=1,\ldots,M_\alpha\}$, $\alpha=1,\ldots,N$, whose elements sum up to $\sum_{k=1}^{M_\alpha}P_{\alpha,k}=\gamma_\alpha\mathbb{I}_d$, where $\gamma_\alpha$ is a probability distribution. Additionally, assume that they are complementary \cite{PetzRuppert}, which means that $\Tr(P_{\alpha,k}P_{\beta,\ell})=f\Tr(P_{\alpha,k})\Tr(P_{\beta,\ell})$, $\alpha\neq\beta$. Under these conditions, one defines the corresponding positive, operator-valued measure (POVM).

\begin{Definition}\label{geam}(\cite{GEAM})
A collection of $N$ generalized equiangular tight frames $\{P_{\alpha,k}:\,k=1,\ldots,M_\alpha;\,\alpha=1,\ldots,N\}$ satisfying $\sum_{k=1}^{M_\alpha}P_{\alpha,k}=\gamma_\alpha\mathbb{I}_d$ for a probability distribution $\gamma_\alpha>0$ is a generalized equiangular measurement (GEAM) provided that $\sum_{\alpha=1}^NM_\alpha=d^2+N-1$ and
\begin{equation}
\begin{split}
\Tr(P_{\alpha,k})&=a_\alpha,\\
\Tr(P_{\alpha,k}^2)&=b_\alpha \Tr(P_{\alpha,k})^2,\\
\Tr(P_{\alpha,k}P_{\alpha,\ell})&=c_\alpha
\Tr(P_{\alpha,k})\Tr(P_{\alpha,\ell}),\qquad k\neq\ell,\\
\Tr(P_{\alpha,k}P_{\beta,\ell})&=
f\Tr(P_{\alpha,k})\Tr(P_{\beta,\ell}),\qquad \alpha\neq\beta,
\end{split}
\end{equation}
where
\begin{equation}
a_\alpha=\frac{d\gamma_\alpha}{M_\alpha},\qquad c_\alpha=\frac{M_\alpha-db_\alpha}{d(M_\alpha-1)},
\qquad f=\frac 1d,
\end{equation}
\begin{equation}\label{ba}
\frac 1d <b_\alpha\leq\frac 1d \min\{d,M_\alpha\}.
\end{equation}
\end{Definition}

Generalized equiangular measurements can be constructed from Hermitian orthonormal bases $\{G_0=\mathbb{I}_d/\sqrt{d},G_{\alpha,k}:\,k=1,\ldots,M_\alpha-1;\,\alpha=1,\ldots,N\}$ with traceless operators $G_{\alpha,k}$. Namely, one has \cite{GEAM}
\begin{equation}
P_{\alpha,k}=\frac{a_\alpha}{d}\mathbb{I}_d+\tau_\alpha H_{\alpha,k},
\end{equation}
where the traceless operators $H_{\alpha,k}$ are related to $G_{\alpha,k}$ in the following way,
\begin{equation}\label{H}
H_{\alpha,k}=\left\{\begin{aligned}
&G_\alpha-\sqrt{M_\alpha}(1+\sqrt{M_\alpha})G_{\alpha,k},\quad k=1,\ldots,M_\alpha-1,\\
&(1+\sqrt{M_\alpha})G_\alpha,\qquad k=M_\alpha,
\end{aligned}\right.
\end{equation}
with $G_\alpha=\sum_{k=1}^{M_\alpha-1}G_{\alpha,k}$. The real parameter
\begin{equation}
\tau_\alpha=\pm\sqrt{\frac{a_\alpha^2(b_\alpha-c_\alpha)}{M_\alpha(\sqrt{M_\alpha}+1)^2}}
\end{equation}
is chosen to guarantee that $P_{\alpha,k}\geq 0$.

There exists an important class of generalized equiangular measurements that are conical 2-designs \cite{Graydon,Graydon2} -- that is, they satisfy
\begin{equation}\label{con}
\sum_{\alpha=1}^N\sum_{k=1}^{M_\alpha}P_{\alpha,k}\otimes P_{\alpha,k}=
\kappa_+\mathbb{I}_d\otimes\mathbb{I}_d+\kappa_-\mathbb{F}_d
\end{equation}
for real parameters $\kappa_+\geq\kappa_->0$, where $\mathbb{F}_d=\sum_{m,n=0}^{d-1}|m\>\<n|\otimes|n\>\<m|$ is the flip operator. This equation holds as long as the GEAM parameters satisfy $a_\alpha^2(b_\alpha-c_\alpha)=S$ with
\begin{equation}\label{Srange}
0<S\leq \min\left\{\frac{d\gamma_\alpha^2}{M_\alpha},\frac{d-1}{M_\alpha-1}\frac{d\gamma_\alpha^2}{M_\alpha}\right\}.
\end{equation}
In this case, $\kappa_\pm$ in eq. (\ref{con}) are given by \cite{GEAM}
\begin{equation}\label{kappas2}
\kappa_+=\mu_N-\frac{S}{d},\qquad\kappa_-=S,\qquad \mu_N=\frac 1d \sum_{\alpha=1}^Na_\alpha\gamma_\alpha.
\end{equation}
It has been shown that a linear dependence between the purity $\Tr(\rho^2)$ of a state $\rho$ and its index of coincidence \cite{Rastegin5}
\begin{equation}\label{pak}
\mathcal{C}=\sum_{\alpha=1}^N\sum_{k=1}^{M_\alpha}p_{\alpha,k}^2,\qquad p_{\alpha,k}=\Tr(P_{\alpha,k}\rho),
\end{equation}
exists if and only if $P_{\alpha,k}$ form a conical 2-design.

\begin{Proposition}(\cite{GEAM}]
If the generalized equiangular measurement is a conical 2-design, then the associated index of coincidence satisfies
\begin{equation}\label{IOCN}
\mathcal{C}=S\left(\Tr\rho^2-\frac 1d\right)+\mu\leq\frac{d-1}{d}S+\mu,
\end{equation}
where $\mu=(1/d)\sum_{\alpha=1}^Na_\alpha\gamma_\alpha$.
\end{Proposition}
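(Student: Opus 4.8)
The plan is to reduce the evaluation of $\mathcal{C}$ to a single application of the conical 2-design identity (\ref{con}). The key observation is that each squared probability can be written as a trace on the doubled Hilbert space: since $p_{\alpha,k}=\Tr(P_{\alpha,k}\rho)$ is a scalar, one has
\begin{equation}
p_{\alpha,k}^2=\Tr(P_{\alpha,k}\rho)\,\Tr(P_{\alpha,k}\rho)=\Tr\bigl[(P_{\alpha,k}\otimes P_{\alpha,k})(\rho\otimes\rho)\bigr].
\end{equation}
Summing over $\alpha$ and $k$ and pulling the sum inside the trace, the operator $\sum_{\alpha,k}P_{\alpha,k}\otimes P_{\alpha,k}$ appears, which is precisely the left-hand side of the 2-design condition.

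With this in hand, I would substitute (\ref{con}) to obtain $\mathcal{C}=\Tr[(\kappa_+\mathbb{I}_d\otimes\mathbb{I}_d+\kappa_-\mathbb{F}_d)(\rho\otimes\rho)]$, and then evaluate the two resulting traces separately. The identity term yields $\Tr[(\mathbb{I}_d\otimes\mathbb{I}_d)(\rho\otimes\rho)]=(\Tr\rho)^2=1$ because $\rho$ is normalized. The flip term is handled by the standard swap identity $\Tr[\mathbb{F}_d(A\otimes B)]=\Tr(AB)$, which gives $\Tr[\mathbb{F}_d(\rho\otimes\rho)]=\Tr(\rho^2)$. Hence $\mathcal{C}=\kappa_++\kappa_-\Tr(\rho^2)$.

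It remains to insert the explicit values $\kappa_+=\mu_N-S/d$ and $\kappa_-=S$ from (\ref{kappas2}) and to note that the constant $\mu_N=(1/d)\sum_\alpha a_\alpha\gamma_\alpha$ coincides with $\mu$. This gives the claimed equality $\mathcal{C}=S(\Tr\rho^2-1/d)+\mu$. The upper bound then follows immediately from $\Tr(\rho^2)\le 1$, valid for any density operator, together with $S>0$: replacing $\Tr(\rho^2)$ by its maximal value $1$ yields $\mathcal{C}\le\frac{d-1}{d}S+\mu$, with equality exactly for pure states.

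The argument is essentially a direct computation once the reformulation is made, so I do not expect a genuine obstacle. The only step requiring care is the verification of the flip-operator trace identity and the confirmation that the hypotheses of the proposition — that the GEAM is a conical 2-design, i.e.\ that $a_\alpha^2(b_\alpha-c_\alpha)=S$ lies in the admissible range (\ref{Srange}) — are exactly what licenses the use of (\ref{con}). The real content of the statement is thus carried entirely by the 2-design identity; the proposition merely repackages it in terms of the index of coincidence and the purity.
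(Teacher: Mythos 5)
Your proof is correct. The paper itself never proves this proposition in-house (it is imported from ref.~\cite{GEAM}); the closest internal argument is the proof of Proposition~\ref{prop}, which recovers the present statement as the special case $L=N$, and that proof proceeds quite differently: it expands $\rho=\tfrac 1d\mathbb{I}_d+\sum_{\alpha,k}r_{\alpha,k}H_{\alpha,k}$ in the Hermitian operator basis, computes each probability $p_{\alpha,k}$ explicitly via the overlaps (\ref{HHH}), and compares the resulting sum of squares with the purity computed in the same coordinates. Your route instead lifts everything to the doubled Hilbert space, writing $\mathcal{C}=\Tr\bigl[\bigl(\sum_{\alpha,k}P_{\alpha,k}\otimes P_{\alpha,k}\bigr)(\rho\otimes\rho)\bigr]$ and invoking (\ref{con}) together with the swap identity $\Tr[\mathbb{F}_d(A\otimes B)]=\Tr(AB)$, which gives $\mathcal{C}=\kappa_++\kappa_-\Tr(\rho^2)$ in one line; inserting (\ref{kappas2}) and $\Tr(\rho^2)\le 1$ finishes the argument. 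This is shorter and makes transparent why an exact \emph{equality} (not merely a bound) holds for the full index of coincidence. What the paper's coordinate method buys in exchange is generality: it extends to the partial sums $\mathcal{C}_L$ with $L<N$ (Proposition~\ref{prop}), where the bound arises from discarding the nonnegative terms $M_\alpha\sum_k r_{\alpha,k}^2-r_\alpha^2\ge 0$ for $\alpha>L$. Your swap-trick argument cannot be adapted there, because a partial sum $\sum_{\alpha\le L}\sum_k P_{\alpha,k}\otimes P_{\alpha,k}$ is no longer of the form $\kappa_+\mathbb{I}_d\otimes\mathbb{I}_d+\kappa_-\mathbb{F}_d$, so the reduction to purity fails for $L<N$.
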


Interestingly, the above proposition can be further generalized. Observe that each generalized equiangular tight frame has its own index of coincidence (after rescaling)
\begin{equation}
\mathcal{C}_\alpha=\frac{1}{\gamma_\alpha^2}\sum_{k=1}^{M_\alpha}p_{\alpha,k}^2,\qquad 
\frac{1}{\gamma_\alpha}\sum_{k=1}^{M_\alpha}p_{\alpha,k}=1.
\end{equation}
Therefore, it is possible to derive an inequality for partial sums of squared probabilities, similar to that for few mutually unbiased bases \cite{Wu}.

\begin{Proposition}\label{prop}
If $\{P_{\alpha,k}:\,k=1,\ldots,M_\alpha;\,\alpha=1,\ldots,N\}$ is a conical 2-design, then the partial index of coincidence
\begin{equation}\label{IOC}
\mathcal{C}_L=\sum_{\alpha=1}^L\sum_{k=1}^{M_\alpha}p_{\alpha,k}^2\leq S\left(\Tr\rho^2-\frac 1d\right)+\mu_L,
\end{equation}
where $\mu_L=(1/d)\sum_{\alpha=1}^La_\alpha\gamma_\alpha$ and $1\leq L\leq N$. The equality is reached for $L=N$.
\end{Proposition}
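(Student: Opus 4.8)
The plan is to reduce the partial inequality to the exact identity for the full index of coincidence and then control the ``missing'' tail by a single elementary bound. The starting point is the preceding Proposition, which for a conical 2-design gives the exact equality
\begin{equation}
\mathcal{C}=\mathcal{C}_N=\sum_{\alpha=1}^N\sum_{k=1}^{M_\alpha}p_{\alpha,k}^2=S\left(\Tr\rho^2-\frac 1d\right)+\mu ,
\end{equation}
with $\mu=(1/d)\sum_{\alpha=1}^Na_\alpha\gamma_\alpha$. First I would write the partial index as the full one minus its tail,
\begin{equation}
\mathcal{C}_L=\mathcal{C}-\sum_{\alpha=L+1}^N\sum_{k=1}^{M_\alpha}p_{\alpha,k}^2 ,
\end{equation}
and substitute the identity above. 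Because $\mu-\mu_L=(1/d)\sum_{\alpha=L+1}^Na_\alpha\gamma_\alpha$, the claimed inequality $\mathcal{C}_L\leq S(\Tr\rho^2-1/d)+\mu_L$ becomes, after cancellation, equivalent to the statement that the tail is bounded below by $\mu-\mu_L$, i.e.
\begin{equation}\label{eq:tailgoal}
\frac 1d\sum_{\alpha=L+1}^Na_\alpha\gamma_\alpha\leq\sum_{\alpha=L+1}^N\sum_{k=1}^{M_\alpha}p_{\alpha,k}^2 .
\end{equation}

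The key step is a frame-by-frame lower bound on $\sum_k p_{\alpha,k}^2$. Here I would use the normalization coming from $\sum_{k=1}^{M_\alpha}P_{\alpha,k}=\gamma_\alpha\mathbb{I}_d$, namely $\sum_{k=1}^{M_\alpha}p_{\alpha,k}=\gamma_\alpha\Tr\rho=\gamma_\alpha$, together with the Cauchy--Schwarz (equivalently QM--AM) inequality, to obtain
\begin{equation}
\sum_{k=1}^{M_\alpha}p_{\alpha,k}^2\geq\frac{1}{M_\alpha}\left(\sum_{k=1}^{M_\alpha}p_{\alpha,k}\right)^2=\frac{\gamma_\alpha^2}{M_\alpha} .
\end{equation}
The crucial observation is that, by $a_\alpha=d\gamma_\alpha/M_\alpha$, the right-hand side is exactly $a_\alpha\gamma_\alpha/d$, so $\sum_k p_{\alpha,k}^2\geq a_\alpha\gamma_\alpha/d$. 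Summing this over the tail indices $\alpha=L+1,\ldots,N$ yields precisely \eqref{eq:tailgoal}, which completes the inequality. For $L=N$ the tail is empty, so $\mathcal{C}_N=\mathcal{C}$ reproduces the exact identity and the bound is saturated.

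There is no genuinely hard obstacle here; the work is in the bookkeeping, and the one insight that makes everything collapse is recognizing that the per-frame Cauchy--Schwarz floor $\gamma_\alpha^2/M_\alpha$ coincides exactly with the coefficient $a_\alpha\gamma_\alpha/d$ that builds up $\mu-\mu_L$. Once that alignment is noticed, the proof is a clean telescoping argument off the full-index equality. I would flag only two points to verify carefully: that $\rho$ is taken to be a normalized state so that $\sum_k p_{\alpha,k}=\gamma_\alpha$, and that the conical 2-design hypothesis is used solely through the exact formula for $\mathcal{C}$ inherited from the previous Proposition, the tail bound itself requiring nothing beyond positivity of the $p_{\alpha,k}$ and the frame normalization.
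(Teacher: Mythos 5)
Your proof is correct, but it is structured differently from the paper's. The paper does not take the full-index identity of Proposition 1 as a black box; instead it re-derives the whole linear structure from scratch, expanding $\rho$ in the Hermitian operator basis $\{H_{\alpha,k}\}$, computing each $p_{\alpha,k}$ in terms of the Bloch-type coefficients $r_{\alpha,k}$, and showing that
\begin{equation}
\sum_{k=1}^{M_\alpha}p_{\alpha,k}^2=\frac{a_\alpha^2M_\alpha}{d^2}
+a_\alpha^2(b_\alpha-c_\alpha)(\sqrt{M_\alpha}+1)^2\left(M_\alpha\sum_{k=1}^{M_\alpha}r_{\alpha,k}^2-r_\alpha^2\right),
\end{equation}
so that the inequality comes from \emph{extending} the sum over $\alpha$ from $L$ to $N$ after verifying each added term is nonnegative, and then invoking the purity expansion $\Tr(\rho^2)=1/d+\sum_{\alpha=1}^N(\sqrt{M_\alpha}+1)^2(M_\alpha\sum_k r_{\alpha,k}^2-r_\alpha^2)$. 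You instead subtract: you write $\mathcal{C}_L=\mathcal{C}-(\text{tail})$, plug in the exact identity for $\mathcal{C}$, and reduce everything to the tail bound $\sum_{\alpha>L}\sum_k p_{\alpha,k}^2\geq\mu-\mu_L$, which follows from per-frame Cauchy--Schwarz and the identity $\gamma_\alpha^2/M_\alpha=a_\alpha\gamma_\alpha/d$. The two arguments rest on the same elementary fact --- the per-frame floor $\sum_k p_{\alpha,k}^2\geq\gamma_\alpha^2/M_\alpha$, which the paper also uses to establish nonnegativity of the fluctuation terms --- but yours is shorter and avoids the operator-basis machinery entirely, at the cost of leaning on Proposition 1 as an exact equality (legitimate, since the paper states it as such, citing prior work). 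What the paper's longer route buys is self-containment and transparency about exactly where the conical 2-design condition $a_\alpha^2(b_\alpha-c_\alpha)=S$ enters; your route makes the logical dependence cleaner: the 2-design hypothesis is needed only for the full-index identity, while the partial bound is pure bookkeeping.
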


\begin{proof}
In this proof, we follow the method introduced in ref. \cite{Rastegin2}.
Any quantum state $\rho$ can be represented as
\begin{equation}\label{rho}
\rho=\frac 1d \mathbb{I}_d+\sum_{\alpha=1}^N\sum_{k=1}^{M_\alpha}r_{\alpha,k}H_{\alpha,k}
\end{equation}
with real-valued parameters $r_{\alpha,k}$. Using the properties of the Hermitian operators $H_{\alpha,k}$,
\begin{equation}\label{HHH}
\begin{split}
\Tr(H_{\alpha,k}^2)&=(M_\alpha-1)(\sqrt{M_\alpha}+1)^2,\\
\Tr(H_{\alpha,k}H_{\alpha,\ell})&=-(\sqrt{M_\alpha}+1)^2,\qquad \ell\neq k,\\
\Tr(H_{\alpha,k}H_{\beta,\ell})&=0,\qquad \beta\neq\alpha.
\end{split}
\end{equation}
we calculate the probabilities
\begin{equation}
p_{\alpha,k}=\Tr(\rho P_{\alpha,k})
=\frac{a_\alpha}{d} +\tau_\alpha\Tr(\rho H_{\alpha,k})
=\frac{a_\alpha}{d}+\tau_\alpha\sum_{\ell=1}^{M_\alpha}r_{\alpha,\ell}\Tr(H_{\alpha,k}H_{\alpha,\ell})
=\frac{a_\alpha}{d} +\tau(\sqrt{M_\alpha}+1)^2(M_\alpha r_{\alpha,k}-r_\alpha),
\end{equation}
where $r_\alpha=\sum_{k=1}^{M_\alpha}r_{\alpha,k}$. Now, the sum of squares over the second index satisfies
\begin{equation}\label{sum}
\sum_{k=1}^{M_\alpha}p_{\alpha,k}^2
=\frac{a_\alpha^2 M_\alpha}{d^2}+\tau_\alpha^2M_\alpha(\sqrt{M_\alpha}+1)^4
\left(M_\alpha\sum_{k=1}^{M_\alpha} r_{\alpha,k}^2-r_\alpha^2\right).
\end{equation}
Remembering that $\sum_{k=1}^{M_\alpha}p_{\alpha,k}=\gamma_\alpha$, one finds
\begin{equation}
\sum_{k=1}^{M_\alpha}p_{\alpha,k}^2\geq
\frac{\gamma_\alpha^2}{M_\alpha}=\frac{a_\alpha^2 M_\alpha}{d^2},
\end{equation}
where we used the fact that $\sum_{k=1}^{M_\alpha}p_{\alpha,k}^2$ reaches its lower bound when all $p_{\alpha,k}=\gamma_\alpha/M_\alpha$.
This shows that the second term on the right hand-side of eq. (\ref{sum}) is positive. Now, we compute the sum over an incomplete set of indices $\alpha=1,\ldots,L$;
\begin{equation}\label{almost}
\begin{split}
\sum_{\alpha=1}^L\sum_{k=1}^{M_\alpha}p_{\alpha,k}^2
&=\frac{1}{d^2} \sum_{\alpha=1}^L a_\alpha^2 M_\alpha +\sum_{\alpha=1}^L\tau_\alpha^2M_\alpha(\sqrt{M_\alpha}+1)^4
\left(M_\alpha\sum_{k=1}^{M_\alpha} r_{\alpha,k}^2-r_\alpha^2\right)\\
&=\frac{1}{d^2} \sum_{\alpha=1}^L a_\alpha^2 M_\alpha +\sum_{\alpha=1}^La_\alpha^2(b_\alpha-c_\alpha)(\sqrt{M_\alpha}+1)^2
\left(M_\alpha\sum_{k=1}^{M_\alpha} r_{\alpha,k}^2-r_\alpha^2\right).
\end{split}
\end{equation}
On the other hand, from eq. (\ref{rho}), we find the purity of the state $\rho$ is equal to
\begin{equation}\label{rho2}
\Tr(\rho^2)=\frac 1d + \sum_{\alpha=1}^N(\sqrt{M_\alpha}+1)^2
\left(M_\alpha\sum_{k=1}^{M_\alpha} r_{\alpha,k}^2-r_\alpha^2\right),
\end{equation}
where we used properties (\ref{HHH}) to compute
\begin{equation}
\sum_{k,\ell=1}^{M_\alpha}r_{\alpha,k}r_{\alpha,\ell}\Tr(H_{\alpha,k}H_{\alpha,\ell})
=(\sqrt{M_\alpha}+1)^2\left(M_\alpha\sum_{k=1}^{M_\alpha}r_{\alpha,k}^2-r_\alpha^2\right),
\end{equation}
After comparing eqs. (\ref{almost}) and (\ref{rho2}), it becomes evident that the relation between the sum of squared $p_{\alpha,k}$ and the purity of $\rho$ is linear provided that $a_\alpha^2(b_\alpha-c_\alpha)=S$, which is exactly the condition for a conical 2-design from Proposition 1. In this case,
\begin{equation}
\begin{split}
\sum_{\alpha=1}^L\sum_{k=1}^{M_\alpha}p_{\alpha,k}^2
&=\frac{1}{d^2} \sum_{\alpha=1}^L a_\alpha^2 M_\alpha 
+S\sum_{\alpha=1}^L(\sqrt{M_\alpha}+1)^2
\left(M_\alpha\sum_{k=1}^{M_\alpha} r_{\alpha,k}^2-r_\alpha^2\right)\\
&\leq\frac{1}{d^2} \sum_{\alpha=1}^L a_\alpha^2 M_\alpha 
+S\sum_{\alpha=1}^N(\sqrt{M_\alpha}+1)^2
\left(M_\alpha\sum_{k=1}^{M_\alpha} r_{\alpha,k}^2-r_\alpha^2\right)\\
&=S\left[\Tr(\rho^2)-\frac 1d\right]+\mu_L,
\end{split}
\end{equation}
where $\mu_L=\sum_{\alpha=1}^L a_\alpha^2 M_\alpha/d^2=(1/d)\sum_{\alpha=1}^La_\alpha\gamma_\alpha$. For $L=N$, this result reproduces the index of coincidence $\mathcal{C}$, for which the equality holds \cite{GEAM}.
\end{proof}

Observe that eq. (\ref{IOC}) is upper bounded by its value on pure states,
\begin{equation}\label{IOC2}
\mathcal{C}_L\leq \widetilde{\mathcal{C}}_L=\frac{d-1}{d}S+\mu_L,
\end{equation}
where $\Tr(\rho^2)=1$.

\section{Positive maps and entanglement witnesses}

Using generalized equiangular measurements that are conical 2-designs, we construct $N$ maps
\begin{equation}
\Phi_\alpha[X]=\sum_{k,\ell=1}^{M_\alpha}\mathcal{O}^{(\alpha)}_{k\ell}P_{\alpha,k}\Tr(XP_{\alpha,\ell}),
\end{equation}
where $\mathcal{O}^{(\alpha)}$ are orthogonal rotation matrices preserving the maximally mixed vector $\mathbf{n}_\ast=(1,\ldots,1)/\sqrt{d}$. Note that
\begin{equation}
\Tr(\Phi_\alpha[X])=a_\alpha\gamma_\alpha\Tr(X),
\end{equation}
which means that $\Phi_\alpha$ are not trace preserving. This is important in proving the following proposition, where trace-preserving $\Phi_\alpha$ would result in imposing additional conditions on $P_{\alpha,k}$.

\begin{Proposition}\label{Prop}
The linear map
\begin{equation}\label{PTP}
\Phi=A\Phi_0+\sum_{\alpha=L+1}^K\Phi_\alpha-\sum_{\alpha=1}^L\Phi_\alpha
\end{equation}
with $A=d(2\widetilde{\mathcal{C}}_L-\widetilde{\mathcal{C}}_K)$ and $\Phi_0[X]=\mathbb{I}_d\Tr(X)/d$ is positive, where $1\leq L\leq K\leq N$.
\end{Proposition}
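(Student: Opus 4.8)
The plan is to check positivity directly on pure inputs. Since $\Phi$ is linear and every positive $X$ is a conical combination of rank-one projectors, $\Phi$ is positive if and only if $\<\psi|\Phi[|\phi\>\<\phi|]|\psi\>\geq 0$ for all unit vectors $|\psi\>,|\phi\>$. First I would evaluate a single block on such inputs. Writing $p_{\alpha,k}(\psi)=\<\psi|P_{\alpha,k}|\psi\>$ and collecting these numbers into a vector $\mathbf{p}_\alpha(\psi)\in\mathbb{R}^{M_\alpha}$, the definition of $\Phi_\alpha$ gives the bilinear form $\<\psi|\Phi_\alpha[|\phi\>\<\phi|]|\psi\>=\mathbf{p}_\alpha(\psi)^{T}\mathcal{O}^{(\alpha)}\mathbf{p}_\alpha(\phi)$.

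Next I would exploit the two defining properties of $\mathcal{O}^{(\alpha)}$ -- orthogonality and invariance of the uniform vector $\mathbf{n}_\ast$. Decompose $\mathbf{p}_\alpha=(\gamma_\alpha/M_\alpha)\mathbf{1}+\tilde{\mathbf{p}}_\alpha$, where $\mathbf{1}=(1,\ldots,1)$; the trace constraint $\sum_k p_{\alpha,k}=\gamma_\alpha$ fixes the uniform part and forces $\tilde{\mathbf{p}}_\alpha\perp\mathbf{1}$ with $\|\tilde{\mathbf{p}}_\alpha\|^2=\sum_k p_{\alpha,k}^2-\gamma_\alpha^2/M_\alpha$. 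Because $\mathcal{O}^{(\alpha)}$ fixes $\mathbf{1}$ and preserves inner products, the cross terms between the uniform and traceless parts cancel, leaving $\<\psi|\Phi_\alpha[|\phi\>\<\phi|]|\psi\>=\gamma_\alpha^2/M_\alpha+\tilde{\mathbf{p}}_\alpha(\psi)^{T}\mathcal{O}^{(\alpha)}\tilde{\mathbf{p}}_\alpha(\phi)$.

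I would then assemble the full expectation. The constant contributions, together with the term $A/d$ coming from $\Phi_0$ and the identity $\gamma_\alpha^2/M_\alpha=a_\alpha^2 M_\alpha/d^2$, add up to $A/d+(\mu_K-\mu_L)-\mu_L$; substituting $A=d(2\widetilde{\mathcal{C}}_L-\widetilde{\mathcal{C}}_K)$ and $\widetilde{\mathcal{C}}_L=\frac{d-1}{d}S+\mu_L$ collapses this precisely to $\frac{d-1}{d}S$. Thus
\begin{equation}
\<\psi|\Phi[|\phi\>\<\phi|]|\psi\>=\frac{d-1}{d}S+\sum_{\alpha=L+1}^{K}\tilde{\mathbf{p}}_\alpha(\psi)^{T}\mathcal{O}^{(\alpha)}\tilde{\mathbf{p}}_\alpha(\phi)-\sum_{\alpha=1}^{L}\tilde{\mathbf{p}}_\alpha(\psi)^{T}\mathcal{O}^{(\alpha)}\tilde{\mathbf{p}}_\alpha(\phi).
\end{equation}
It remains to show the fluctuation terms cannot undercut $\frac{d-1}{d}S$. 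Each summand obeys $|\tilde{\mathbf{p}}_\alpha(\psi)^{T}\mathcal{O}^{(\alpha)}\tilde{\mathbf{p}}_\alpha(\phi)|\leq\|\tilde{\mathbf{p}}_\alpha(\psi)\|\,\|\tilde{\mathbf{p}}_\alpha(\phi)\|$ by Cauchy--Schwarz together with norm invariance of $\mathcal{O}^{(\alpha)}$, so the two signed sums are jointly bounded below by $-\sum_{\alpha=1}^{K}\|\tilde{\mathbf{p}}_\alpha(\psi)\|\,\|\tilde{\mathbf{p}}_\alpha(\phi)\|$.

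The main obstacle is this final bound: estimating the $+$ and $-$ groups separately would only give $-2\frac{d-1}{d}S$, which is too weak. Instead I would apply a single Cauchy--Schwarz across the entire index range $\alpha=1,\ldots,K$, obtaining $\sum_{\alpha=1}^{K}\|\tilde{\mathbf{p}}_\alpha(\psi)\|\,\|\tilde{\mathbf{p}}_\alpha(\phi)\|\leq(\sum_{\alpha}\|\tilde{\mathbf{p}}_\alpha(\psi)\|^2)^{1/2}(\sum_{\alpha}\|\tilde{\mathbf{p}}_\alpha(\phi)\|^2)^{1/2}$. Since $\sum_{\alpha=1}^{K}\|\tilde{\mathbf{p}}_\alpha(\psi)\|^2=\mathcal{C}_K(\psi)-\mu_K$, Proposition \ref{prop} applied to the pure state $|\psi\>$ (equivalently eq. (\ref{IOC2})) bounds this by $\frac{d-1}{d}S$, and likewise for $|\phi\>$. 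Hence the fluctuation is at least $-\frac{d-1}{d}S$ and $\<\psi|\Phi[|\phi\>\<\phi|]|\psi\>\geq 0$, which establishes positivity.
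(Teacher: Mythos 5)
Your proof is correct, and it takes a genuinely different route from the paper's. The paper reduces positivity to Mehta's trace criterion $\Tr(\Phi[P]^2)/[\Tr(\Phi[P])]^2\leq 1/(d-1)$ on rank-one inputs $P$, and then computes the second moment $\Tr(\Phi[P]^2)$ explicitly: this requires the cross terms $\Tr(\Phi_\alpha[P]\Phi_\beta[P])$, the algebraic identity $S=a_\alpha^2(b_\alpha-c_\alpha)=a_\alpha\gamma_\alpha(1-dc_\alpha)$, and one application of the pure-state bound (\ref{IOC2}) to the input state $P$. You instead verify $\<\psi|\Phi[|\phi\>\<\phi|]|\psi\>\geq 0$ directly as a bilinear form: the splitting $\mathbf{p}_\alpha=(\gamma_\alpha/M_\alpha)\mathbf{1}+\tilde{\mathbf{p}}_\alpha$, the invariance $\mathcal{O}^{(\alpha)}\mathbf{1}=\mathbf{1}$ (hence also $\mathcal{O}^{(\alpha)T}\mathbf{1}=\mathbf{1}$ by orthogonality, which kills the cross terms), and the definition of $A$ collapse the constant part to exactly $\tfrac{d-1}{d}S$; then a single joint Cauchy--Schwarz over all $\alpha=1,\ldots,K$, combined with (\ref{IOC2}) applied to the two pure states $|\psi\>$ and $|\phi\>$ separately, bounds the fluctuation below by $-\tfrac{d-1}{d}S$. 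Your remark that bounding the positive and negative groups separately only yields $-2\tfrac{d-1}{d}S$ correctly identifies the crux: the joint Cauchy--Schwarz is your elementary surrogate for Mehta's theorem, and it is what makes the chosen value of $A$ exactly sufficient. What your route buys: no external spectral lemma, and the conical 2-design hypothesis enters only through the coincidence inequality of Proposition \ref{prop}, never through $b_\alpha$, $c_\alpha$ explicitly; it also makes transparent that $A$ is tuned to offset the worst-case fluctuation. What the paper's route buys: the coincidence bound is needed for only one state rather than two, and the argument follows the same template as earlier witness constructions from MUBs and symmetric measurements, making the connection to that literature immediate.
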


\begin{proof}
The condition for positivity of $\Phi$ follows directly from Mehta's theorem \cite{Mehta}
\begin{equation}\label{pos}
\frac{\Tr(\Phi[P])^2}{[\Tr(\Phi[P])]^2}\leq\frac{1}{d-1}.
\end{equation}
It is enough to prove the above inequality for rank-1 projectors $P$ \cite{MUBs}. The denominator simply reads
\begin{equation}
[\Tr(\Phi[P])]^2=(A+d\mu_K-2d\mu_L)^2.
\end{equation}
Computing the numerator requires more attention. First, note that
\begin{equation}\label{PTP1}
\begin{split}
\Tr(\Phi[P])^2=\Tr\Bigg\{&A^2\Phi_0[P]^2+\sum_{\alpha,\beta=1}^L
\Phi_\alpha[P]\Phi_\beta[P]+\sum_{\alpha,\beta=L+1}^K\Phi_\alpha[P]\Phi_\beta[P]
+2A\sum_{\alpha=L+1}^K\Phi_0[P]\Phi_\alpha[P]\\&-2A\sum_{\alpha=1}^L\Phi_0[P]\Phi_\alpha[P]
-2\sum_{\alpha=L+1}^K\sum_{\beta=1}^L\Phi_\alpha[P]\Phi_\beta[P]\Bigg\}.
\end{split}
\end{equation}
Next, using the properties of the rotation matrix
\begin{equation}
\sum_{k=1}^{M_\alpha}\mathcal{O}^{(\alpha)}_{k\ell}=\sum_{\ell=1}^{M_\alpha}\mathcal{O}^{(\alpha)}_{k\ell}=1,\qquad
\sum_{k=1}^{M_\alpha}\mathcal{O}^{(\alpha)}_{k\ell}\mathcal{O}^{(\alpha)}_{km}=\delta_{\ell m},
\end{equation}
as well as the properties of $\Phi_\alpha$,
\begin{equation}
\Tr(\Phi_0[P]^2)=\frac 1d,\qquad \Tr(\Phi_0[P]\Phi_\alpha[P])=\frac{a_\alpha\gamma_\alpha}{d},\qquad
\Tr(\Phi_\alpha[P]\Phi_\beta[P])=\frac{a_\alpha a_\beta\gamma_\alpha\gamma_\beta}{d},\quad \alpha\neq\beta,
\end{equation}
we simplify eq. (\ref{PTP1}) to
\begin{equation}\label{PTP2}
\begin{split}
d\Tr(\Phi[P])^2=(A+d\mu_K-2d\mu_L)^2-\sum_{\alpha=1}^Ka_\alpha^2\gamma_\alpha^2
+d\sum_{\alpha=1}^K\Tr(\Phi_\alpha[P])^2,
\end{split}
\end{equation}
where $\mu_L=\sum_{\alpha=1}^La_\alpha\gamma_\alpha/d$, just as defined in Proposition 1. Now, observe that
\begin{equation}
\Tr(\Phi_\alpha[P]^2)=a_\alpha^2\gamma_\alpha^2c_\alpha+a_\alpha^2(b_\alpha-c_\alpha)
\sum_{k=1}^{M_\alpha}p_{\alpha,k}^2
\end{equation}
with $p_{\alpha,k}=\Tr(PP_{\alpha,k})$ being calculated on a pure state $P$. In this step, it becomes necessary to assume that $P_{\alpha,k}$ form a conical 2-design (i.e., $a_\alpha^2(b_\alpha-c_\alpha)=S$). Then, we can apply the results from eq. (\ref{IOC2}). For pure states $\rho=P$, the purity $\Tr(\rho^2)=1$, and hence
\begin{equation}
\sum_{\alpha=1}^K\Tr(\Phi_\alpha[P]^2)=\sum_{\alpha=1}^Ka_\alpha^2\gamma_\alpha^2c_\alpha
+S\sum_{\alpha=1}^K\sum_{k=1}^{M_\alpha}p_{\alpha,k}^2\leq
\sum_{\alpha=1}^Ka_\alpha^2\gamma_\alpha^2c_\alpha+\frac{d-1}{d}S^2+S\mu_K.
\end{equation}
This way, eq. (\ref{PTP2}) reduces to
\begin{equation}\label{PTP4}
d\Tr(\Phi[P])^2\leq (A+d\mu_K-2d\mu_L)^2
-\sum_{\alpha=1}^Ka_\alpha^2\gamma_\alpha^2(1-dc_\alpha)+S[(d-1)S+d\mu_K].
\end{equation}
Finally, note that
\begin{equation}
S=a_\alpha^2(b_\alpha-c_\alpha)=a_\alpha\gamma_\alpha(1-dc_\alpha),
\end{equation}
and therefore
\begin{equation}\label{PTP3}
d\Tr(\Phi[P])^2\leq (A+d\mu_K-2d\mu_L)^2+(d-1)S^2.
\end{equation}
Now, if we recall the definition of $A$ and $\widetilde{\mathcal{C}}_L$, we see that
\begin{equation}
A=d(2\widetilde{\mathcal{C}}_L-\widetilde{\mathcal{C}}_K)=(d-1)S-d(\mu_K-2\mu_L),
\end{equation}
and eq. (\ref{PTP3}) further simplifies to
\begin{equation}
d\Tr(\Phi[P])^2\leq d(d-1)S^2=\frac{d[\Tr(\Phi[P])]^2}{d-1},
\end{equation}
which proves that $\Phi$ is indeed a positive map.
\end{proof}

Positive but not completely positive maps are used in the theory of quantum entanglement to construct block-positive operators that detect entangled (non-separable) states. Through the Choi-Jamio{\l}kowski isomorphism \cite{Choi,Jamiolkowski}
\begin{equation}\label{W}
W=\sum_{m,n=0}^{d-1}|m\>\<n|\otimes\Phi[|m\>\<n|],
\end{equation}
a one-to-one correspondence is established between a positive map $\Phi$ and an entanglement witness $W$, using an operational basis $|k\>$ in $\mathbb{C}^d$. 
In particular, the witness corresponding to $\Phi$ from Proposition 2 reads
\begin{equation}\label{W2}
W=\Big(2\widetilde{\mathcal{C}}_L-\widetilde{\mathcal{C}}_K\Big)\mathbb{I}_d\otimes\mathbb{I}_d
+\sum_{\alpha=L+1}^KJ_\alpha-\sum_{\alpha=1}^LJ_\alpha
\end{equation}
where
\begin{equation}
J_\alpha=\sum_{k,\ell=1}^{M_\alpha}\mathcal{O}_{k\ell}^{(\alpha)}
\overline{P}_{\alpha,\ell}\otimes P_{\alpha,k}.
\end{equation}
Note that $W$ is parametrized only by two partial indices of coincidence: one associated with the number of negative terms $L$ and the other with the number of all terms $K$.

By choosing specific generalized equiangular measurement operators $P_{\alpha,k}$ and the rotation matrices $\mathcal{O}_{k\ell}^{(\alpha)}$, one finds explicit forms of entanglement witnesses. Note that the construction method of $\Phi$ from Proposition \ref{Prop} allows also for completely positive maps. However, $W$ corresponding to a completely positive $\Phi$ does not detect quantum entanglement due to $\Tr(W\rho)\geq 0$ for all $\rho$. Obtaining analytical formulas for complete positivity of a general $\Phi$ is a highly non-trivial task, therefore it is crucial to check for it after constructing explicit examples.

\begin{Proposition}
A trivial case follows from the choice $\mathcal{O}^{(\alpha)}=\mathbb{I}_{M_\alpha}$ and $L=K=N$. Then, using eq. (\ref{con}), one gets
\begin{equation}
W=S\left(\mathbb{I}_{d^2}-dP_+\right),
\end{equation}
where $P_+=(1/d)\sum_{m,n=0}^{d-1}|m\>\<n|\otimes|m\>\<n|$ is a maximally entangled state. This is the entanglement witness for the reduction map \cite{EW_reduction,P_maps}.
\end{Proposition}

\section{Detection of bound entanglement}

Of special interest are entanglement witnesses that detect bound entangled states -- that is, entangled states that cannot be distilled into maximally entangled states using local operations and classical communication (LOCC). Such witnesses are known as indecomposable or non-decomposable. On the other hand, there are decomposable witnesses, which can be represented as $W=X+Y^\Gamma$, where $X$ and $Y$ are positive operators, and $\Gamma=\oper\otimes T$ is a partial transposition. Construction of indecomposable entanglement witnesses is non-trivial but important for the purpose of detecting PPT (positive partial transposition) entangled quantum states.

In what follows, we provide qutrit ($d=3$) examples of indecomposable entanglement witnesses from generalized equiangular measurements that go beyond the well-known classes of equinumerous measurements. The GEAMs are constructed as in ref. \cite{GEAM}, using the normalized Gell-Mann matrices: $G_{0}=\mathbb{I}_3/\sqrt{3}$ and
\begin{align*}
G_{1,1}=\frac{1}{\sqrt{2}}
\begin{pmatrix}
0 & 1 & 0 \\
1 & 0 & 0 \\
0 & 0 & 0
\end{pmatrix},\qquad
G_{1,2}=\frac{1}{\sqrt{2}}
\begin{pmatrix}
0 & -i & 0 \\
i & 0 & 0 \\
0 & 0 & 0
\end{pmatrix},\qquad
G_{2,1}=\frac{1}{\sqrt{2}}
\begin{pmatrix}
0 & 0 & 1 \\
0 & 0 & 0 \\
1 & 0 & 0
\end{pmatrix},\qquad
&G_{2,2}=\frac{1}{\sqrt{2}}
\begin{pmatrix}
0 & 0 & -i \\
0 & 0 & 0 \\
i & 0 & 0
\end{pmatrix},\\
G_{3,1}=\frac{1}{\sqrt{2}}
\begin{pmatrix}
0 & 0 & 0 \\
0 & 0 & 1 \\
0 & 1 & 0
\end{pmatrix},\qquad
G_{3,2}=\frac{1}{\sqrt{2}}
\begin{pmatrix}
0 & 0 & 0 \\
0 & 0 & -i \\
0 & i & 0
\end{pmatrix},\qquad
G_{3,3}=\frac{1}{\sqrt{2}}
\begin{pmatrix}
1 & 0 & 0 \\
0 & -1 & 0 \\
0 & 0 & 0
\end{pmatrix},\qquad
&G_{3,4}=\frac{1}{\sqrt{6}}
\begin{pmatrix}
1 & 0 & 0 \\
0 & 1 & 0 \\
0 & 0 & -2
\end{pmatrix}.
\end{align*}
For the first two POVMs ($\alpha=1,2$), we choose the mutually unbiased measurements \cite{Kalev}
\begin{equation}\label{E12}
E_{\alpha,k}=\left\{\begin{aligned}
&\frac 13 \mathbb{I}_3+t_\alpha\left[G_{1,1}+G_{1,2}-\sqrt{3}(1+\sqrt{3})G_{\alpha,k}
\right],\quad k=1,2,\\
&\frac 13 \mathbb{I}_3+t_\alpha(1+\sqrt{3})(G_{1,1}+G_{1,2}),\qquad k=3.
\end{aligned}\right.
\end{equation}
The last POVM is a $(1,5)$-POVM \cite{SIC-MUB} with
\begin{equation}\label{E3}
E_{3,k}=\left\{\begin{aligned}
&\frac 15 \mathbb{I}_3+t_3
\left[G_{3,1}+G_{3,2}+G_{3,3}+G_{3,4}-\sqrt{5}(1+\sqrt{5})G_{\alpha,k}
\right],\quad k=1,2,3,4,\\
&\frac 15 \mathbb{I}_3+t_3(1+\sqrt{5})(G_{3,1}+G_{3,2}+G_{3,3}+G_{3,4}),\qquad k=5,
\end{aligned}\right.
\end{equation}
The parameters $t_\alpha$ are chosen to be non-optimal -- that is, there exist greater values of $t_\alpha$ that guarantee $E_{\alpha,k}\geq 0$. To simplify further calculations, we take 
\begin{equation}
t_1=t_2=\frac{1}{3(1+\sqrt{3})},\qquad t_3=\frac{1}{10(1+\sqrt{5})}.
\end{equation}
The corresponding generalized equiangular measurement follows from
\begin{equation}
P_{\alpha,k}=\gamma_\alpha E_{\alpha,k},
\end{equation}
where $\gamma_1=\gamma_2$ and $\gamma_3=1-2\gamma_1$ with
\begin{equation}\label{gamma1}
\gamma_1=\sqrt{5}-2.
\end{equation}
By construction, this GEAM is a conical 2-design with $S=9/4-\sqrt{5}$.

The most general $3\times 3$ rotation matrix that preserves the maximally mixed vector is the cyclic matrix \cite{MUBs}
\begin{equation}
\mathcal{R}(\theta)=
\begin{pmatrix}
c_1(\theta) & c_2(\theta) & c_3(\theta) \\
c_3(\theta) & c_1(\theta) & c_2(\theta) \\
c_2(\theta) & c_3(\theta) & c_1(\theta)
\end{pmatrix}
\end{equation}
with
\begin{equation}
c_1(\theta)=\frac 13 + \frac 23 \cos(\theta),\qquad c_2(\theta)=\frac 13 + \frac 23 \cos(\theta-2\pi/3),\qquad c_3(\theta)=\frac 13 + \frac 23 \cos(\theta+2\pi/3).
\end{equation}
In the following examples, we take
\begin{equation}
\mathcal{O}^{(1)}=\mathcal{R}(2\pi/3),\qquad 
\mathcal{O}^{(2)}=\mathcal{R}(-2\pi/3),
\end{equation}
and the $5\times 5$ permutation matrix
\begin{equation}
\mathcal{O}^{(3)}=
\begin{pmatrix}
0 & 1 & 0 & 0 & 0 \\
1 & 0 & 0 & 0 & 0 \\
0 & 0 & 0 & 1 & 0 \\
0 & 0 & 1 & 0 & 0 \\
0 & 0 & 0 & 0 & 1
\end{pmatrix}.
\end{equation}

Examples of PPT states $\rho$ detected via a given witness $W$ are constructed by demanding that $\rho$ has the same form as $W$. Then, conditions $\rho\geq 0$ and $\rho^\Gamma\geq 0$ are applied, and the remaining free parameters are chosen so that $\Tr(\rho W)<0$.

\begin{Example}
For $K=N=3$ and $L=2$, we construct the entanglement witness
\begin{equation}
W=(2\widetilde{C}_2-\widetilde{C}_3)\mathbb{I}_3\otimes\mathbb{I}_3+J_1-J_2-J_3,
\end{equation}
where
\begin{equation}
\widetilde{C}_2=\frac 23 S+\frac{\gamma_2^2}{3}+\frac{\gamma_3^2}{5}.
\end{equation}
The explicit form of $W$ reads
\begin{equation}
W=
\left[\begin{array}{c c c|c c c|c c c}
x & \cdot & \cdot & \cdot & iu & \cdot & \cdot & \cdot & \cdot \\
\cdot & y & \cdot & \cdot & \cdot & \cdot & \cdot & \cdot & \cdot \\
\cdot & \cdot & z & \cdot & \cdot & \cdot & \cdot & \cdot & \cdot \\
\hline
\cdot & \cdot & \cdot & y & \cdot & \cdot & \cdot & \cdot & \cdot \\
-iu & \cdot & \cdot & \cdot & z & \cdot & \cdot & \cdot & \cdot \\
\cdot & \cdot & \cdot & \cdot & \cdot & x & \cdot & -iv & \cdot \\
\hline
\cdot & \cdot & \cdot & \cdot & \cdot & \cdot & z & \cdot & \cdot \\
\cdot & \cdot & \cdot & \cdot & \cdot & iv & \cdot & x & \cdot \\
\cdot & \cdot & \cdot & \cdot & \cdot & \cdot & \cdot & \cdot & y
\end{array}\right]
\end{equation}
where
\begin{equation}\label{form2}
\begin{split}
&x=\frac{(2-\sqrt{3})(7-3\sqrt{5})}{12(3+\sqrt{5})},\qquad 
y=\frac 16 (9-4\sqrt{5}),\qquad 
z=\frac{(2+\sqrt{3})(7-3\sqrt{5})}{12(3+\sqrt{5})},\\
u&=\frac{\sqrt{3}}{4} (9-4\sqrt{5}),\qquad 
v=\frac{1}{4\sqrt{161+72\sqrt{5}}},
\end{split}
\end{equation}
and the dots represent zeros in the matrix entries. This witness is indecomposable as it detects the rank-8 PPT entangled state
\begin{equation}
\rho=\frac{1}{300}
\left[\begin{array}{c c c|c c c|c c c}
43 & \cdot & \cdot & \cdot & -2i\sqrt{258} & \cdot & \cdot & \cdot & \cdot \\
\cdot & 33 & \cdot & \cdot & \cdot & \cdot & \cdot & \cdot & \cdot \\
\cdot & \cdot & 24 & \cdot & \cdot & \cdot & \cdot & \cdot & \cdot \\
\hline
\cdot & \cdot & \cdot & 33 & \cdot & \cdot & \cdot & \cdot & \cdot \\
2i\sqrt{258} & \cdot & \cdot & \cdot & 24 & \cdot & \cdot & \cdot & \cdot \\
\cdot & \cdot & \cdot & \cdot & \cdot & 43 & \cdot & 6i\sqrt{22} & \cdot \\
\hline
\cdot & \cdot & \cdot & \cdot & \cdot & \cdot & 24 & \cdot & \cdot \\
\cdot & \cdot & \cdot & \cdot & \cdot & -6i\sqrt{22} & \cdot & 43 & \cdot \\
\cdot & \cdot & \cdot & \cdot & \cdot & \cdot & \cdot & \cdot & 33
\end{array}\right],
\end{equation}
which can be shown by calculating the expectation value
\begin{equation}
\Tr(\rho W)\simeq -2.22\cdot 10^{-5}.
\end{equation}
\end{Example}

\begin{Example}
This time, we fix $K=N=3$ and $L=1$. The entanglement witness is constructed as follows,
\begin{equation}
W=(2\widetilde{C}_1-\widetilde{C}_3)\mathbb{I}_3\otimes\mathbb{I}_3+J_1-J_2+J_3,
\end{equation}
where
\begin{equation}
\widetilde{C}_1=\frac 23 S+\frac{\gamma_2^2}{3}.
\end{equation}
We find that the matrix representation of this witness is of the form
\begin{equation}
W=
\left[\begin{array}{c c c|c c c|c c c}
z & \cdot & \cdot & \cdot & iu & \cdot & \cdot & \cdot & \cdot \\
\cdot & y & \cdot & \cdot & \cdot & \cdot & \cdot & \cdot & \cdot \\
\cdot & \cdot & x & \cdot & \cdot & \cdot & \cdot & \cdot & \cdot \\
\hline
\cdot & \cdot & \cdot & y & \cdot & \cdot & \cdot & \cdot & \cdot \\
-iu & \cdot & \cdot & \cdot & x & \cdot & \cdot & \cdot & \cdot \\
\cdot & \cdot & \cdot & \cdot & \cdot & z & \cdot & -iv & \cdot \\
\hline
\cdot & \cdot & \cdot & \cdot & \cdot & \cdot & x & \cdot & \cdot \\
\cdot & \cdot & \cdot & \cdot & \cdot & iv & \cdot & z & \cdot \\
\cdot & \cdot & \cdot & \cdot & \cdot & \cdot & \cdot & \cdot & y
\end{array}\right]
\end{equation}
with the parameters given by eq. (\ref{form2}).
Again, $W$ is indecomposable, allowing us to detect the rank-7 PPT entangled state
\begin{equation}
\rho=\frac{1}{750}
\left[\begin{array}{c c c|c c c|c c c}
31 & \cdot & \cdot & \cdot & -5i\sqrt{186} & \cdot & \cdot & \cdot & \cdot \\
\cdot & 69 & \cdot & \cdot & \cdot & \cdot & \cdot & \cdot & \cdot \\
\cdot & \cdot & 150 & \cdot & \cdot & \cdot & \cdot & \cdot & \cdot \\
\hline
\cdot & \cdot & \cdot & 69 & \cdot & \cdot & \cdot & \cdot & \cdot \\
5i\sqrt{186} & \cdot & \cdot & \cdot & 150 & \cdot & \cdot & \cdot & \cdot \\
\cdot & \cdot & \cdot & \cdot & \cdot & 31 & \cdot & -31i & \cdot \\
\hline
\cdot & \cdot & \cdot & \cdot & \cdot & \cdot & 150 & \cdot & \cdot \\
\cdot & \cdot & \cdot & \cdot & \cdot & 31i & \cdot & 31 & \cdot \\
\cdot & \cdot & \cdot & \cdot & \cdot & \cdot & \cdot & \cdot & 69
\end{array}\right].
\end{equation}
The expectation value of the witness $W$ in this state is equal to
\begin{equation}
\Tr(\rho W)\simeq -8.05\cdot 10^{-5}.
\end{equation}
\end{Example}

\section{Separability criteria}

Take a bipartite composite system with subsystems labeled by $A$ and $B$, so that the total state $\rho:\mathcal{H}_A\otimes\mathcal{H}_B\to\mathcal{H}_A\otimes\mathcal{H}_B$. On each subsystem, define a generalized equiangular measurement: $\{P_{\alpha,k}^{A}:\,k=1,\ldots,M_\alpha^A;\,\alpha=1,\ldots,N_A\}$ and analogically for $B$. No further assumptions are made regarding the measurement parameters $(a_\alpha,b_\alpha,c_\alpha,f)$, so in general they can depend on the choice of subsystem. Now, to measure correlations between subsystems $A$ and $B$, one introduces the correlation matrix \cite{deVicente,deVicente2} with elements
\begin{equation}
\mathcal{P}_{\alpha,k;\beta,\ell}=\Tr\Big[\rho(P_{\alpha,k}^{A}\otimes P_{\beta,\ell}^{B})\Big].
\end{equation}
The first separability condition follows in terms of its trace, and as it turns out it depends on the maximal indices of coincidence $\widetilde{\mathcal{C}}^A$, $\widetilde{\mathcal{C}}^B$ associated with $\{P_{\alpha,k}^{A}\}$ and $\{P_{\alpha,k}^{B}\}$, respectively.

\begin{Proposition}\label{esicA}
If a bipartite state $\rho$ is separable, then the necessary separability condition reads
\begin{equation}
\Tr\mathcal{P}\leq\frac{\widetilde{\mathcal{C}}^A+\widetilde{\mathcal{C}}^B}{2},
\end{equation}
where the GEAMs are chosen in such a way that $N_A=N_B$ and $M_\alpha^A=M_\alpha^B$.
\end{Proposition}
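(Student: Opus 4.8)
The plan is to use the linearity of $\Tr\mathcal{P}$ in $\rho$ together with the convex structure of the separable states, which reduces the statement to an elementary estimate on product states. First I would make the index bookkeeping explicit: with $N_A=N_B=:N$ and $M_\alpha^A=M_\alpha^B=:M_\alpha$ the matrix $\mathcal{P}$ is square, and its trace is the aligned diagonal sum
\begin{equation}
\Tr\mathcal{P}=\sum_{\alpha=1}^{N}\sum_{k=1}^{M_\alpha}\Tr\big[\rho\,(P_{\alpha,k}^{A}\otimes P_{\alpha,k}^{B})\big].
\end{equation}
This is precisely why the matching conditions on $N$ and $M_\alpha$ are imposed: they guarantee that the composite index $(\alpha,k)$ labels the corresponding frame element on both subsystems, so that the diagonal is well defined.

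Since $\Tr\mathcal{P}$ is linear in $\rho$ while the right-hand side is a fixed constant, and every separable state is a convex combination $\rho=\sum_i q_i\,\rho_i^{A}\otimes\rho_i^{B}$, it suffices to establish the bound for a single product state $\rho=\rho^{A}\otimes\rho^{B}$; the general case then follows by averaging with the weights $q_i$. For a product state the trace factorizes term by term,
\begin{equation}
\Tr\mathcal{P}=\sum_{\alpha,k}p_{\alpha,k}^{A}\,p_{\alpha,k}^{B},\qquad p_{\alpha,k}^{X}=\Tr(\rho^{X}P_{\alpha,k}^{X}),\quad X=A,B.
\end{equation}
The key inequality is then AM--GM applied to each summand, $p_{\alpha,k}^{A}p_{\alpha,k}^{B}\le\frac12\big[(p_{\alpha,k}^{A})^2+(p_{\alpha,k}^{B})^2\big]$, which upon summation gives $\Tr\mathcal{P}\le\frac12(\mathcal{C}^{A}+\mathcal{C}^{B})$ in terms of the two single-subsystem indices of coincidence. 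Finally I would invoke the index-of-coincidence bound, in the form of eq. (\ref{IOC2}): because the purity satisfies $\Tr[(\rho^{X})^2]\le 1$, the index $\mathcal{C}^{X}$ is maximized on pure states, so $\mathcal{C}^{X}\le\widetilde{\mathcal{C}}^{X}$ for $X=A,B$. Substituting yields $\Tr\mathcal{P}\le(\widetilde{\mathcal{C}}^{A}+\widetilde{\mathcal{C}}^{B})/2$, as claimed.

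I expect the main difficulty to be organizational rather than computational. The two points that must be handled with care are the identification of $\Tr\mathcal{P}$ with the aligned diagonal sum (which is what forces the matching of $N$ and $M_\alpha$ and makes $\mathcal{C}^{A}$, $\mathcal{C}^{B}$ appear after factorization), and the order of operations: the reduction to product states by convexity must be carried out \emph{before} any measurement-specific estimate is applied, since the index-of-coincidence bound of eq. (\ref{IOC2}) is a statement about a single subsystem. The AM--GM step is what symmetrizes $A$ and $B$ and produces the averaged form $(\widetilde{\mathcal{C}}^{A}+\widetilde{\mathcal{C}}^{B})/2$; replacing it by Cauchy--Schwarz would give the tighter-looking $\sqrt{\mathcal{C}^{A}\mathcal{C}^{B}}$ but trade the clean additive threshold for one that is harder to evaluate, so I would keep the elementary estimate.
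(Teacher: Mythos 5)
Your proposal is correct and follows essentially the same route as the paper: factorization of $\Tr\mathcal{P}$ on product states, the AM--GM estimate $p^A_{\alpha,k}p^B_{\alpha,k}\le\frac12[(p^A_{\alpha,k})^2+(p^B_{\alpha,k})^2]$ giving $(\mathcal{C}^A+\mathcal{C}^B)/2$, the bound $\mathcal{C}^X\le\widetilde{\mathcal{C}}^X$ from eq.~(\ref{IOC2}), and extension to all separable states by convexity/linearity. Your version merely performs the convexity reduction up front (and states it slightly more carefully than the paper's one-line appeal to linearity), which is a presentational difference, not a different proof.
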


\begin{proof}
For a product state $\rho=\rho_A\otimes\rho_B$, one has
$\mathcal{P}_{\alpha,k;\beta,\ell}=p_{\alpha,k}^Ap_{\beta,\ell}^B$ with $p_{\alpha,k}^A=\Tr(P_{\alpha,k}^{A}\rho_A)$ and $p_{\alpha,k}^B=\Tr(P_{\alpha,k}^{B}\rho_B)$. 
Note that the trace of $\mathcal{P}$ is well defined only under the additional condition that $\mathcal{P}$ is a square matrix. This is the case as long as the GEAMs on both subsystems are of equal number: $N_A=N_B\equiv N$ and $M_\alpha^A=M_\alpha^B\equiv M_\alpha$. Therefore, under this additional condition,
\begin{equation}
\Tr\mathcal{P}=\sum_{\alpha=1}^N\sum_{k=1}^{M_\alpha}p_{\alpha,k}^Ap_{\alpha,k}^B
\leq\frac 12 \sum_{\alpha=1}^N\sum_{k=1}^{M_\alpha}\Big[(p_{\alpha,k}^A)^2+(p_{\alpha,k}^B)^2\Big]
=\frac{\mathcal{C}^A+\mathcal{C}^B}{2}\leq\frac{\widetilde{\mathcal{C}}^A+\widetilde{\mathcal{C}}^B}{2}.
\end{equation}
From the linearity of the trace, this result extends from product states to all separable states.
\end{proof}

Let us check the detection properties of this separability condition on the example of the Werner states \cite{Werner}
\begin{equation}
\rho(\phi)=\frac{1}{d(d^2-1)}\Big[(d-\phi)\mathbb{I}_{d^2}+(d\phi-1)\mathbb{F}_d\Big],
\end{equation}
where $-1\leq \phi\leq 1$ and $\mathbb{F}_d$ is the flip operator. These states are entangled for $\phi<0$ and separable for $\phi\geq 0$.
Assuming that the same GEAM $P_{\alpha,k}$ is used on both subsystems, one gets
\begin{equation}\label{trp}
\begin{split}
\Tr\mathcal{P}&=\sum_{\alpha=1}^N\sum_{k=1}^{M_\alpha}\mathcal{P}_{\alpha,k;\alpha,k}
=\sum_{\alpha=1}^N\frac{a_\alpha^2M_\alpha}{d(d^2-1)}[d-\phi+(d\phi-1)b_\alpha]\\
&=\frac{1}{d(d^2-1)}\left[d^2\mu(d-\phi)+(d\phi-1)\sum_{\alpha=1}^Na_\alpha^2b_\alpha M_\alpha\right].
\end{split}
\end{equation}
Now, remembering that $P_{\alpha,k}$ form a conical 2-design, we find that
\begin{equation}
S=a_\alpha^2(b_\alpha-c_\alpha)=a_\alpha^2\frac{M_\alpha(db_\alpha-1)}{d(M_\alpha-1)}.
\end{equation}
After multiplying both sides by $d(M_\alpha-1)$ and taking a sum over all $\alpha$, one arrives at
\begin{equation}
dS\sum_{\alpha=1}^N(M_\alpha-1)=\sum_{\alpha=1}^Na_\alpha^2M_\alpha(db_\alpha-1),
\end{equation}
which further reduces to
\begin{equation}\label{ssuma}
d(d^2-1)S=d\sum_{\alpha=1}^Na_\alpha^2b_\alpha M_\alpha-d^2\mu.
\end{equation}
Using the above equality in eq. (\ref{trp}) results in
\begin{equation}
\Tr\mathcal{P}=\mu+\frac{S}{d}(d\phi-1).
\end{equation}
Hence, the separability criterion from Proposition \ref{esicA} produces
\begin{equation}
\mu+\frac{S}{d}(d\phi-1)\leq \frac{d-1}{d}S+\mu,
\end{equation}
which is equivalent to $\phi\leq 1$. This means that any conical 2-design GEAM detects all separable Werner states but no entanglement.

Now, we take the isotropic states \cite{iso}
\begin{equation}
\rho_{\rm iso}(\theta)=\frac{1-\theta}{d^2-1}(\mathbb{I}_{d^2}-P_+)+\theta P_+,
\end{equation}
where $0\leq \theta\leq 1$, and $P_+=(1/d)\sum_{m,n=1}^d|m\>\<n|\otimes|m\>\<n|$ is a maximally entangled state. It is known that $\rho_{\rm iso}(\theta)$ are entangled for $\theta>1/d$ and separable for $\theta\leq 1/d$. Taking $P_{\alpha,k}^A=P_{\alpha,k}$ and $P_{\alpha,k}^B=\overline{P}_{\alpha,k}$ results in
\begin{equation}
\Tr\mathcal{P}=\sum_{\alpha=1}^N\frac{a_\alpha^2M_\alpha}{d(d^2-1)}\Big[d(1-\theta)
+(d^2\theta-1)b_\alpha\Big]=\mu+\frac{S}{d}(d^2\theta-1),
\end{equation}
where we used eq. (\ref{ssuma}) to simplify the formula. According to Proposition \ref{esicA}, $\rho_{\rm iso}(\theta)$ is separable if
\begin{equation}
\mu+\frac{S}{d}(d^2\theta-1)\leq\frac{d-1}{d}S+\mu,
\end{equation}
or $\theta\leq 1/d$. If $\theta>1/d$, this state is entangled. Hence, any conical 2-design GEAM detects all separable and entangled isotropic states.

Alternative separability conditions involve the trace norm $\|X\|_{\Tr}=\Tr\sqrt{X^\dagger X}$.

\begin{Proposition}\label{esicB}
If a bipartite state $\rho$ is separable, then $\|\mathcal{P}\|_{\Tr}\leq \sqrt{\widetilde{\mathcal{C}}^A\widetilde{\mathcal{C}}^B}$.
\end{Proposition}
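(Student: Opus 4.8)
The plan is to establish the bound first on product states, where the correlation matrix collapses to a rank-one object whose trace norm is immediate, and then lift it to arbitrary separable states by convexity. For a product state $\rho=\rho_A\otimes\rho_B$ the matrix element factorizes exactly as in the proof of Proposition \ref{esicA}, namely $\mathcal{P}_{\alpha,k;\beta,\ell}=p^A_{\alpha,k}\,p^B_{\beta,\ell}$, so that $\mathcal{P}=\mathbf{p}^A(\mathbf{p}^B)^{\top}$ is the outer product of the probability vectors $\mathbf{p}^A=(p^A_{\alpha,k})$ and $\mathbf{p}^B=(p^B_{\beta,\ell})$. A rank-one matrix $\mathbf{u}\mathbf{v}^{\top}$ has a single nonzero singular value $\|\mathbf{u}\|_2\,\|\mathbf{v}\|_2$, so its trace norm equals precisely that product. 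Hence $\|\mathcal{P}\|_{\Tr}=\|\mathbf{p}^A\|_2\,\|\mathbf{p}^B\|_2=\sqrt{\mathcal{C}^A}\,\sqrt{\mathcal{C}^B}$, where $\mathcal{C}^{A}=\sum_{\alpha,k}(p^A_{\alpha,k})^2$ and its counterpart $\mathcal{C}^{B}$ are the indices of coincidence of the reduced states. I note that, unlike Proposition \ref{esicA}, no squareness assumption on $\mathcal{P}$ is required here, since the trace norm is well defined for rectangular matrices as the sum of singular values.

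The second ingredient is the index-of-coincidence bound already available from eq. (\ref{IOC2}) with $L=N$, which gives $\mathcal{C}^{A}\le\widetilde{\mathcal{C}}^{A}$ and $\mathcal{C}^{B}\le\widetilde{\mathcal{C}}^{B}$ for every state. Combining this with the product-state computation yields $\|\mathcal{P}\|_{\Tr}=\sqrt{\mathcal{C}^A\mathcal{C}^B}\le\sqrt{\widetilde{\mathcal{C}}^A\widetilde{\mathcal{C}}^B}$ directly in the product case.

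To reach general separable states, I would write $\rho=\sum_i q_i\,\rho^{(i)}_A\otimes\rho^{(i)}_B$ with $q_i\ge 0$ and $\sum_i q_i=1$. Linearity of the partial traces gives $\mathcal{P}=\sum_i q_i\,\mathcal{P}^{(i)}$, where each $\mathcal{P}^{(i)}$ is the rank-one correlation matrix of the $i$-th product term. Because the trace norm is a genuine norm, the triangle inequality yields $\|\mathcal{P}\|_{\Tr}\le\sum_i q_i\,\|\mathcal{P}^{(i)}\|_{\Tr}$. Applying the product-state bound to each summand, and using that $\widetilde{\mathcal{C}}^{A}$ and $\widetilde{\mathcal{C}}^{B}$ are the maximal (pure-state) values—so they dominate the index of coincidence of every individual reduced state $\rho^{(i)}_A$ and $\rho^{(i)}_B$—each term is at most $\sqrt{\widetilde{\mathcal{C}}^A\widetilde{\mathcal{C}}^B}$. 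Since the convex weights sum to one, the claimed bound follows.

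I expect the only delicate point to be this final convexity step: the matrices $\mathcal{P}^{(i)}$ are each rank one but are not simultaneously diagonalizable, so $\|\mathcal{P}\|_{\Tr}$ does not equal $\sum_i q_i\|\mathcal{P}^{(i)}\|_{\Tr}$—the triangle inequality supplies only an upper bound, which is exactly what is needed. The essential observation that makes the argument close is that the per-term bound is \emph{uniform}, because the pure-state indices of coincidence $\widetilde{\mathcal{C}}^{A},\widetilde{\mathcal{C}}^{B}$ dominate the value on any (possibly mixed) reduced state; everything else is routine bookkeeping.
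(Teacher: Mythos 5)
Your proof is correct and follows essentially the same route as the paper: for product states the correlation matrix is the rank-one outer product $\mathbf{p}^A(\mathbf{p}^B)^{\top}$ with trace norm $\sqrt{\mathcal{C}^A\mathcal{C}^B}\leq\sqrt{\widetilde{\mathcal{C}}^A\widetilde{\mathcal{C}}^B}$, and the extension to separable states uses convexity (triangle inequality) of the trace norm together with the uniform per-term bound. Your explicit spelling-out of the convexity step, and the observation that no squareness of $\mathcal{P}$ is needed here (unlike Proposition \ref{esicA}), simply make explicit what the paper's terse proof leaves implicit.
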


\begin{proof}
Using the notation from the proof to Proposition \ref{esicA} and the methods from ref. \cite{ESIC}, we show that
\begin{equation}
\|\mathcal{P}\|_{{\rm{tr}}}=\Tr\sqrt{\mathcal{P}^2}=\sqrt{\sum_{\alpha=1}^{N_A}
\sum_{k=1}^{M_\alpha^A}(p_{\alpha,k}^A)^2}
\sqrt{\sum_{\beta=1}^{N_B}\sum_{\ell=1}^{M_\beta^B}(p_{\beta,\ell}^B)^2}
=\sqrt{\mathcal{C}^A\mathcal{C}^B}
\leq\sqrt{\widetilde{\mathcal{C}}^A\widetilde{\mathcal{C}}^B}.
\end{equation}
Finally, from the convexity of the trace norm, this condition extends from product states to all separable states.
\end{proof}

The above results include as special cases the separability criteria for mutually unbiased bases \cite{Spengler,Liu} and measurements \cite{ChenMa,Liu2}, general SIC POVMs \cite{ChenLi}, $(N,M)$-POVMs \cite{SIC-MUB}, generalized symmetric measurements \cite{SIC-MUB_general}, and measurements based on equiangular tight frames \cite{W_ETS}.

An improved but more computationally demanding criterion is based on the matrix
\begin{equation}
\mathcal{R}_{\alpha,k;\beta,\ell}=\Tr\Big[(\rho-\rho_A\otimes\rho_B)
(P_{\alpha,k}^{A}\otimes P_{\beta,\ell}^{B})\Big],
\end{equation}
where $\rho^A=\Tr_B(\rho)$ and $\rho^B=\Tr_A(\rho)$ are the marginal states of $\rho$, obtained after calculating the partial trace over one subsystem.

\begin{Proposition}\label{esic2}
If a bipartite state $\rho$ is separable, then
\begin{equation}\label{enh}
\|\mathcal{R}\|_{{\rm{tr}}}\leq\sqrt{\widetilde{\mathcal{C}}^A-\mathcal{C}^A(\rho^A)}
\sqrt{\widetilde{\mathcal{C}}^B-\mathcal{C}^B(\rho^B)},
\end{equation}
where $\mathcal{C}^A(\rho^A)$ and $\mathcal{C}^B(\rho^B)$ are indices of coincidence for the marginal states of $\rho$.
\end{Proposition}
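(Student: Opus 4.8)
The plan is to recognize $\mathcal{R}$ as a \emph{covariance matrix} of the measurement statistics and to bound its trace norm by a double application of the Cauchy--Schwarz inequality, in the spirit of the covariance-matrix criterion of ref.~\cite{ESIC}. Fix a separable decomposition $\rho=\sum_i q_i\,\rho_A^{(i)}\otimes\rho_B^{(i)}$ with $q_i\geq 0$, $\sum_i q_i=1$, and abbreviate $p_{\alpha,k}^{A,(i)}=\Tr(P_{\alpha,k}^A\rho_A^{(i)})$, $p_{\alpha,k}^{A}=\Tr(P_{\alpha,k}^A\rho^A)=\sum_i q_i p_{\alpha,k}^{A,(i)}$, and analogously for $B$. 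The first step is the centering identity
\begin{equation}
\mathcal{R}_{\alpha,k;\beta,\ell}=\sum_i q_i\big(p_{\alpha,k}^{A,(i)}-p_{\alpha,k}^{A}\big)\big(p_{\beta,\ell}^{B,(i)}-p_{\beta,\ell}^{B}\big),
\end{equation}
obtained by expanding the product and using $\sum_i q_i p^{A,(i)}=p^A$, $\sum_i q_i p^{B,(i)}=p^B$ and $\sum_i q_i=1$. This writes $\mathcal{R}=\sum_i q_i\,u^{(i)}(v^{(i)})^{\mathrm{T}}$ as a sum of rank-one outer products, where $u^{(i)}$ and $v^{(i)}$ collect the centered probabilities $p_{\alpha,k}^{A,(i)}-p_{\alpha,k}^{A}$ and $p_{\beta,\ell}^{B,(i)}-p_{\beta,\ell}^{B}$.

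Next I would estimate the trace norm. Using $\|u\,v^{\mathrm{T}}\|_{\mathrm{tr}}=\|u\|_2\|v\|_2$ for rank-one matrices together with the triangle inequality and then Cauchy--Schwarz in the index $i$ (with weights $q_i$),
\begin{equation}
\|\mathcal{R}\|_{\mathrm{tr}}\leq\sum_i q_i\,\|u^{(i)}\|_2\,\|v^{(i)}\|_2\leq\sqrt{\sum_i q_i\,\|u^{(i)}\|_2^2}\;\sqrt{\sum_i q_i\,\|v^{(i)}\|_2^2}.
\end{equation}
The remaining task is to identify the two weighted variances. Expanding the square and again using $\sum_i q_i p^{A,(i)}=p^A$ gives
\begin{equation}
\sum_i q_i\,\|u^{(i)}\|_2^2=\sum_i q_i\,\mathcal{C}^A(\rho_A^{(i)})-\mathcal{C}^A(\rho^A),
\end{equation}
with $\mathcal{C}^A(\sigma)=\sum_{\alpha,k}[\Tr(P_{\alpha,k}^A\sigma)]^2$. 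Since every $\rho_A^{(i)}$ is a state with $\Tr\big((\rho_A^{(i)})^2\big)\leq 1$, eq.~(\ref{IOC2}) (taken at $L=N$) gives $\mathcal{C}^A(\rho_A^{(i)})\leq\widetilde{\mathcal{C}}^A$, so that $\sum_i q_i\,\|u^{(i)}\|_2^2\leq\widetilde{\mathcal{C}}^A-\mathcal{C}^A(\rho^A)$; the same bound holds on the $B$-side. Combining the three displays yields $\|\mathcal{R}\|_{\mathrm{tr}}\leq\sqrt{\widetilde{\mathcal{C}}^A-\mathcal{C}^A(\rho^A)}\,\sqrt{\widetilde{\mathcal{C}}^B-\mathcal{C}^B(\rho^B)}$.

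The decisive point, and the reason this criterion genuinely strengthens Propositions~\ref{esicA}--\ref{esicB}, is that one cannot reduce to product states here: on any product state $\rho=\rho_A\otimes\rho_B$ one has $\rho-\rho^A\otimes\rho^B=0$, so $\mathcal{R}$ vanishes identically and the convexity argument used before conveys no information. All the content thus lies in treating a genuine mixture, and the key idea is the covariance representation of the first step, which removes the subtraction $p^A(p^B)^{\mathrm{T}}$ and turns $\mathcal{R}$ into a manifestly rank-one-decomposable object so that the triangle inequality and the $i$-sum Cauchy--Schwarz become applicable. I expect the main technical care to be exactly in that first step (verifying the centering identity and keeping the roles of $\rho_A^{(i)}$ versus the marginal $\rho^A$ straight); the variance evaluation and the bound $\mathcal{C}^A\leq\widetilde{\mathcal{C}}^A$ from eq.~(\ref{IOC2}) are then routine. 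Finally I would remark that the natural conclusion is the upper bound $\|\mathcal{R}\|_{\mathrm{tr}}\leq\sqrt{\widetilde{\mathcal{C}}^A-\mathcal{C}^A(\rho^A)}\sqrt{\widetilde{\mathcal{C}}^B-\mathcal{C}^B(\rho^B)}$ rather than an exact equality, since for a mixed product state the left-hand side is zero while the right-hand side is strictly positive.
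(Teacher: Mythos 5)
Your proof is correct and follows essentially the same route as the paper's: the same centering identity $\rho-\rho^A\otimes\rho^B=\sum_i q_i\,(\rho_A^{(i)}-\rho^A)\otimes(\rho_B^{(i)}-\rho^B)$, the same rank-one decomposition of $\mathcal{R}$ into weighted outer products, the triangle inequality plus weighted Cauchy--Schwarz, and the same evaluation of the weighted variance $\sum_i q_i\,\mathcal{C}^A(\rho_A^{(i)})-\mathcal{C}^A(\rho^A)$ bounded through eq.~(\ref{IOC2}). Your closing remark is also on target: the paper's own proof likewise establishes only the upper bound, so the equality sign in eq.~(\ref{enh}) should indeed be ``$\leq$'', for exactly the reason you give (a mixed product state has $\mathcal{R}=0$ while the right-hand side is strictly positive).
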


\begin{proof}
Any separable state $\rho$ can be written as
\begin{equation}
\rho=\sum_{j=1}^nq_j\rho^A_j\otimes\rho^B_j,
\end{equation}
where $q_j$ is a probability distribution. The corresponding marginal states read
\begin{equation}
\rho^A=\sum_{j=1}^nq_j\rho^A_j,\qquad \rho^B=\sum_{j=1}^nq_j\rho^B_j.
\end{equation}
Therefore, one finds
\begin{equation}\label{diff}
(\rho-\rho^A\otimes\rho^B)(P_{\alpha,k}^{A}\otimes P_{\beta,\ell}^{B})=\sum_{j=1}^nq_j\left(\rho^A_j-\rho^A\right)P_{\alpha,k}^{A}
\otimes\left(\rho^B_j-\rho^B\right)P_{\beta,\ell}^{B}.
\end{equation}
Now, introduce two vectors, $v_j^A$ and $v_j^B$, with the following elements,
\begin{equation}
\left(v^A_j\right)_{\alpha,k}=\Tr\left[\left(\rho^A_j-\rho^A\right)P_{\alpha,k}^{A}\right],\qquad 
\left(v^B_j\right)_{\beta,\ell}=\Tr\left[\left(\rho^B_j-\rho^B\right)P_{\beta,\ell}^{B}\right].
\end{equation}
This allows us to simplify the notation after taking the trace of eq. (\ref{diff}),
\begin{equation}
\mathcal{R}_{\alpha,k;\beta,\ell}=\Tr\Big[(\rho-\rho_A\otimes\rho_B)
(P_{\alpha,k}^{A}\otimes P_{\beta,\ell}^{B})\Big]=\sum_{j=1}^nq_j
\left(v^A_j\right)_{\alpha,k}\left(v^B_j\right)_{\beta,\ell}.
\end{equation}
Next, from the triangle inequality of the trace norm, we calculate
\begin{equation}\label{RR}
\|\mathcal{R}_{\alpha,k;\beta,\ell}\|_{\rm{tr}}\leq
\sum_{j=1}^nq_j\|v^A_j\|_{\rm{tr}}\|v^B_j\|_{\rm{tr}}
=\sum_{j=1}^nq_j\|v^A_j\|\|v^B_j\|,
\end{equation}
where we used the fact that the trace norm of a vector is simply a vector norm. From the Cauchy-Schwarz inequality,
\begin{equation}\label{CS}
\sum_{j=1}^nq_j\|v^A_j\|\|v^B_j\|\leq\sqrt{\sum_{j=1}^nq_j\|v^A_j\|^2}\sqrt{\sum_{m=1}^nq_m\|v^B_m\|^2}.
\end{equation}
In analogy to eq. (\ref{pak}), let us denote the trace expressions by
\begin{equation}
p_{\alpha,k}^A(\rho^A)=\Tr(P_{\alpha,k}^A\rho^A),\qquad 
p_{\beta,\ell}^B(\rho^B)=\Tr(P_{\beta,\ell}^B\rho^B),
\end{equation}
where the corresponding indices of coincidence are given by
\begin{equation}
\mathcal{C}^A(\rho^A)=\sum_{\alpha=1}^{N_A}\sum_{k=1}^{M_\alpha^A}
\left[p_{\alpha,k}^A(\rho^A)\right]^2,\qquad
\mathcal{C}^B(\rho^B)=\sum_{\beta=1}^{N_B}\sum_{\ell=1}^{M_\beta^B}
\left[p_{\beta,\ell}^B(\rho^B)\right]^2.
\end{equation}
Then, the term under the square root reduces to
\begin{equation}
\sum_{j=1}^nq_j\|v^A_j\|^2=\sum_{j=1}^nq_j\sum_{\alpha=1}^{N_A}\sum_{k=1}^{M_\alpha^A}
\Big(\Tr\left[\left(\rho^A_j-\rho^A\right)P_{\alpha,k}^{A}\right]\Big)^2
=\sum_{j=1}^nq_j\mathcal{C}^A(\rho_j^A)-\mathcal{C}^A(\rho^A)
\leq\widetilde{\mathcal{C}}^A-\mathcal{C}^A(\rho^A),
\end{equation}
with $\widetilde{\mathcal{C}}^A$ being the upper bound of $\mathcal{C}^A$,
and analogically for the subsystem $B$. After substituting these formulas into eq. (\ref{CS}) and then eq. (\ref{RR}), we reproduce eq. (\ref{enh}).
\end{proof}

This generalizes the separability criteria for popular measurements: mutually unbiased bases, mutually unbiased measurements, general SIC POVMs \cite{Liu2,Shen,Liu3}, and also for equiangular tight frames \cite{TangWu,QiZhang}. Moreover, our results allow to simplify these criteria significantly by proving that they depend only on the index of coincidence for the marginal state and its upper bound.

\begin{Example}
Consider a bipartite qutrit state $\rho=p\rho_x+(1-p)P_+$ that is a mixture of Horodecki's bound entangled state \cite{HORODECKI1997333}
\begin{equation}
\rho_x=\frac{1}{1+8x}
\left[\begin{array}{c c c|c c c|c c c}
x & \cdot & \cdot & \cdot & x & \cdot & \cdot & \cdot & x \\
\cdot & x & \cdot & \cdot & \cdot & \cdot & \cdot & \cdot & \cdot \\
\cdot & \cdot & x & \cdot & \cdot & \cdot & \cdot & \cdot & \cdot \\
\hline
\cdot & \cdot & \cdot & x & \cdot & \cdot & \cdot & \cdot & \cdot \\
x & \cdot & \cdot & \cdot & x & \cdot & \cdot & \cdot & x \\
\cdot & \cdot & \cdot & \cdot & \cdot & x & \cdot & \cdot & \cdot \\
\hline
\cdot & \cdot & \cdot & \cdot & \cdot & \cdot & \frac{1+x}{2} & \cdot & \frac{\sqrt{1-x^2}}{2} \\
\cdot & \cdot & \cdot & \cdot & \cdot & \cdot & \cdot & x & \cdot \\
x & \cdot & \cdot & \cdot & x & \cdot & \frac{\sqrt{1-x^2}}{2} & \cdot & \frac{1+x}{2}
\end{array}\right],
\end{equation}
$0<x<1$, and the maximally entangled state $P_+=(1/3)\sum_{m,n=1}^3|m\>\<n|\otimes|m\>\<n|$. Then, for this quantum state, construct the correlation matrix $\mathcal{P}$ with $P_{\alpha,k}^A=P_{\alpha,k}$ and $P_{\alpha,k}^B=\overline{P}_{\alpha,k}$, using the GEAM $\{P_{\alpha,k}\}$ given by eqs. (\ref{E12}--\ref{gamma1}). Numerical calculations show the range of parameter $p$ for which entanglement of $\rho$ is detected for a fixed $x$ -- that is, when the sufficient separability condition is broken. The results are presented in Table 1. Note that $\widetilde{C}$ is the upper bound for the index of coincidence corresponding to $P_{\alpha,k}$, and
\begin{equation}
\widetilde{C}_{AB}=\sqrt{\widetilde{\mathcal{C}}-\mathcal{C}(\rho^A)}
\sqrt{\widetilde{\mathcal{C}}-\mathcal{C}(\rho^B)}.
\end{equation}
Our results indicate that Proposition 7 detects the most and Proposition 5 the least entanglement. The differences in the range of $p$ decrease with the increase of the parameter $x$.
Comparing to analogical calculations in ref. \cite{Shen}, it is straightforward to see that our choice of a GEAM is able to detect more entanglement than MUMs and general SIC POVMs.
\end{Example}

\begin{table}[ht]
\centering
\begin{tabular}{|c|c|c|}
\hline
Value of $x$ & Entanglement criterion & Range of $p$ \\
\hline
\multirow{3}{*}{$x=0.1250$} & $\Tr\mathcal{P}>\widetilde{\mathcal{C}}$ & $0\leq p\leq 0.9014$ \\
& $\|\mathcal{P}\|_{\rm tr}>\widetilde{\mathcal{C}}$ & $0\leq p\leq 0.9310$ \\
& $\|\mathcal{R}\|_{\rm tr}>\widetilde{\mathcal{C}}_{AB}$ & $0\leq p\leq 0.9605$ \\
\hline
\multirow{3}{*}{$x=0.3750$} & $\Tr\mathcal{P}>\widetilde{\mathcal{C}}$ & $0\leq p\leq 0.9624$ \\
& $\|\mathcal{P}\|_{\rm tr}>\widetilde{\mathcal{C}}$ & $0\leq p\leq 0.9722$ \\
& $\|\mathcal{R}\|_{\rm tr}>\widetilde{\mathcal{C}}_{AB}$ & $0\leq p\leq 0.9791$ \\
\hline
\multirow{3}{*}{$x=0.6250$} & $\Tr\mathcal{P}>\widetilde{\mathcal{C}}$ & $0\leq p\leq 0.9846$ \\
& $\|\mathcal{P}\|_{\rm tr}>\widetilde{\mathcal{C}}$ & $0\leq p\leq 0.9884$ \\
& $\|\mathcal{R}\|_{\rm tr}>\widetilde{\mathcal{C}}_{AB}$ & $0\leq p\leq 0.9906$ \\
\hline
\multirow{3}{*}{$x=0.8750$} & $\Tr\mathcal{P}>\widetilde{\mathcal{C}}$ & $0\leq p\leq 0.9961$ \\
& $\|\mathcal{P}\|_{\rm tr}>\widetilde{\mathcal{C}}$ & $0\leq p\leq 0.9970$ \\
& $\|\mathcal{R}\|_{\rm tr}>\widetilde{\mathcal{C}}_{AB}$ & $0\leq p\leq 0.9975$ \\
\hline
\end{tabular}
\caption{Range of the parameter $p$ for which entanglement of $\rho$ is detected by the criteria from Propositions 5--7.}
\label{table1}
\end{table}

\section{Conclusions}

In this paper, we applied the recently introduced generalized equiangular measurements to quantum entanglement detection. It was shown that, as long as GEAMs are conical 2-designs, the purity of the state is directly related to the partial index of coincidence, where the summation of squared probabilities is performed over few generalized equiangular tight frames. Such measurements were then used to define a class of positive but not trace-preserving maps and the corresponding entanglement witnesses. Interestingly, these witnesses are parametrized solely by two partial indices of coincidence. Examples for bound entanglement detection had been provided. Next, we used the generalized equiangular measurements to construct separability criteria for bipartite states with subsystems of arbitrary dimensions.

In further research, it would be interesting to examine the properties of entanglement witnesses, such as extremality or optimality. Construction of witnesses and separability criteria from measurements that are not conical 2-designs remains as an open question. Another possible direction is to generalize the presented formalism to multipartite systems and subsystems of unequal dimensionality.

\section{Acknowledgements}

This research was funded in whole or in part by the National Science Centre, Poland, Grant number 2021/43/D/ST2/00102. For the purpose of Open Access, the author has applied a CC-BY public copyright licence to any Author Accepted Manuscript (AAM) version arising from this submission.

\section{Author Contribution}

K.S. is solely responsible for the entire research process.

\bibliography{bibliography}
\bibliographystyle{unsrt}

%

\end{document}